\begin{document}

\title{Perfect Prediction in Minkowski Spacetime: Perfectly Transparent Equilibrium for Dynamic Games with Imperfect Information}
\titlerunning{Perfect Prediction in Minkowski Spacetime}


\author{Ghislain Fourny}


\institute{G. Fourny \at
              ETH Z\"urich \\
              Department of Computer Science \\
              \email{ghislain.fourny@inf.ethz.ch}\\
}

\date{May 10, 2019, Updated May 22, 2019}

\maketitle

\begin{abstract}

The assumptions of necessary rationality and necessary knowledge of strategies, also known as perfect prediction, lead to at most one surviving outcome, immune to the knowledge that the players have of them. Solutions concepts implementing this approach have been defined on both dynamic games with perfect information and no ties, the Perfect Prediction Equilibrium, and strategic games with no ties, the Perfectly Transparent Equilibrium.

In this paper, we generalize the Perfectly Transparent Equilibrium to games in extensive form with imperfect information and no ties. Both the Perfect Prediction Equilibrium and the Perfectly Transparent Equilibrium for strategic games become special cases of this generalized equilibrium concept. The generalized equilibrium, if there are no ties in the payoffs, is at most unique, and is Pareto-optimal.

We also contribute a special-relativistic interpretation of a subclass of the games in extensive form with imperfect information as a directed acyclic graph of decisions made by any number of agents, each decision being located at a specific position in Minkowski spacetime, and the information sets and game structure being derived from the causal structure. Strategic games correspond to a setup with only spacelike-separated decisions, and dynamic games to one with only timelike-separated decisions.

The generalized Perfectly Transparent Equilibrium thus characterizes the outcome and payoffs reached in a general setup where decisions can be located in any generic positions in Minkowski spacetime, under necessary rationality and necessary knowledge of strategies. We also argue that this provides a directly usable mathematical framework for the design of extension theories of quantum physics with a weakened free choice assumption.

\keywords{Counterfactual dependency, Non-Cooperative Game Theory, Non-Nashian Game Theory, Spacetime, Superrationality, Preemption, Imperfect information}

\end{abstract}

\section{Introduction}
\epigraph{How often have I said to you that when you have eliminated the impossible, whatever remains, however improbable, must be the truth?}{\textit{Sir Arthur Conan Doyle,\\Sherlock Holmes,\\The Sign of the Four (1890)}}

\subsection{Nashian game theory and strong free choice}

Most game-theoretical solution concepts, based on the work of John \citet{Nash1951}, assume that players make their decisions independently from each other.

This can be expressed in counterfactual terms, formalized by \citet{Lewis1973} by saying that an agent Mary picks some stragegy S, the other agents anticipated that Mary was going to pick strategy S, and that if Mary had, counterfactually, picked a different strategy T, the other agents' anticipations \emph{would still have been that Mary was going to pick strategy S}.

This assumption that a decision is independent is often called free will or free choice in literature. In this paper, we refer to it as a \emph{strong free choice} assumption. Formally, it can be formulated by saying that, if something is correlated to the decision, then it could potentially have been caused by it \citep{Renner2011}. In special-relativistic terms, it comes down to saying that anything correlated to the decision must be its future light cone.

Nashian solution concepts are thus characterized by unilateral deviations when optimizing utility, where each agent can adjust their strategy by assuming that the opponents' strategies are left unchanged.

\subsection{Non-Nashian game theory and perfect prediction}

An emerging, different line of research, non-Nashian game theory, drops this assumption in various ways.

There is a specific class of non-Nashian game theoretical results based on the opposite assumption, namely, that agents are perfectly predictable, and that the prediction of a decision is perfectly correlated with this decision. This can be expressed in counterfactual terms by saying that an agent Mary picks some stragegy S, the other agents anticipated that Mary was going to pick strategy S, and that if Mary had, counterfactually, picked a different strategy T, the other agents \emph{would have anticipated that Mary was going to pick strategy T}. This idea was introduced by \citet{Dupuy2000} and is called Perfect Prediction. This can be interpreted as a weaker form of free will in which the agents \emph{could have acted otherwise}, but not precluding predictability.

Contrary to intuition, assuming that agents are perfectly predictable does not lead to a trivial theory, but instead translates to a fix-point problem: the apparent conflict between making a free decision and being predictable no matter what one does translates into finding those outcomes that are immune against their anticipation. Knowing which outcome will be reached in advance, the players play towards this outcome. \emph{In spite of} knowing the outcome in advance, the players play towards it. Put more boldly, the very anticipation of a specific outcome by the players \emph{causes} them to play towards this outcome. \citet{Dupuy2000} conjectured that, under this assumption, interesting and desirable properties emerge, such as existence, uniqueness and Pareto-optimality of the outcome at hand.

This conjecture was found to be correct for games in extensive form, with perfect information with no ties. The corresponding equilibrium is called the Perfect Prediction Equilibrium \citep{Fourny2018}. For games in normal form, the equilibrium is at most unique and Pareto-optimal, however, it does not exist for all games \citep{Fourny2017}. It was called Perfectly Transparent Equilibrium.

In this paper, we extend the Perfectly Transparent Equilibrium concept to any game in extensive form with imperfect information. The above counterpart equilibria are thus special cases of this new, more general equilibrium concept.

The underlying Kripke semantics \citep{Kripke1963} formalizing the reasoning with possible worlds, explicit accessibility relations and counterfactual functions was introduced in \citep{Fourny2017b}. It also applies, with a few adaptations, to the generalized Perfectly Transparent Equilibrium.

\subsection{Related work}

Other non-Nashian results are found in literature, most of them with assumptions intermediate between Nashian Free Choice (independent decisions) and Perfect Prediction (decisions are always correctly anticipated). Joe Halpern dubbed this intermediate spectrum ``translucency''.

Solution concepts include:

\begin{itemize}
\item Superrationality in symmetric games \citep{Hofstadter1983}; in such games, superrational agents take into account that they are both superrational, that they will thus make the same decision and that the outcome must be on the diagonal of the  matrix. The equillibrium is then reached by taking the highest payoff on the diagonal. Deviations are not unilateral, but are directly correlated, which is the reason for optimizing on the diagonal. Superrationality coincides with the Perfectly Transparent Equilibrium on symmetric games in normal forms whenever the latter exists -- thus, the Perfectly Transparent Equilibrium can be seen as the generalization of superrational behavior on all games with imperfect information. 
\item Translucent players, minimax-rationalizability and individual rationality \citep{Halpern:2013aa}, \citep{Capraro2015}. For translucent players, some information on their decisions may partially leak, leading to potentially non-unilateral deviations. In this setting, a weaker assumption than necessary rationality is taken: common counterfactual knowledge of rationality (CCBR). The difference is that other players would have been rational in the case of a deviation, but excluding the deviating agent. Minimax-rationalizability is a superset of rationalizability that builds on more conservative assumptions: a strategy is minimax-dominated if its best outcome is worse than the worst outcome of another strategy, which takes all potential correlated opponent's moves into account in case of a deviation. Individual rationality, formally identical to the decade-old concept known in the context of the folk theorem and repeated games, is shown to be relevant in the translucent context, because it also makes sense when deviations are not unilateral. 
\item Second-Order Nash Equilibria \citep{Bilo2011}, which provide a superset of Nash equilibria for one-shot games by considering small improving sets leading to a Nash equilibrium.
\item Joint-selfish-rational equilibrium \citep{Shiffrin2009}, which provides a bargaining framework allowing two agents to improve on the Subgame Perfect Equilibrium on extensive form games with perfect information, with rounds of Pareto-improving iterations. The Joint-selfish-rational equilibrium differs from the Perfect Prediction Equilibrium in that eliminated outcomes are only temporary eliminated, and may be reconsidered when comparing utilities. In our framework, eliminated outcomes are permanently considered impossible in the rest of the reasoning.
\item Program equilibria \citep{Tennenholtz2004}, which provide a framework in which the agents can provide and see their own source code, which prevents any betrayal. Any individually rational outcome, in other words, an outcome that can be a steady state in a repeated game (folk theorem), can be implemented as a program equilibrium. In spite fo the transparency provided by source code access, the counterfactual implications in program equilibrium are essentially Nashian, i.e., there is only knowledge of strategies in the actual world.
\end{itemize}

\section{Games with imperfect information}

We start with the core definitions of a game in extensive form with imperfect information, as typically encountered in literature.

What  differentiates a game with imperfect information from a game with perfect information is that, when making a choice, there is some opacity regarding other agents' choices, even though these choices are not in the future. For example, these other decisions are being made in separate rooms, with no communication. Nodes are thus grouped into information sets, and a choice of action has to be taken not knowing at which node, within this information set, one is playing.

Games with imperfect information thus also subsume games in normal form: a game in normal form can be expressed as a game in extensive form with imperfect information\footnote{In literature, these games can also be referred as games with perfect information and simultaneous moves \citep{Rubinstein1994}.}.

\subsection{Formal definition}
\label{section-definition-game}

We take as the definition of a game with imperfect information the same as that of \citet{Jackson2019} at Stanford, making a few more properties explicit. An alternate definition based on sequences of actions is given by \citet{Rubinstein1994}.

\begin{definition}[Game with imperfect information]
A game with imperfect information is a tuple $(N, A, H, Z, \chi, \rho, \sigma, u, I)$ where
\begin{itemize}
\item $N$ is a set of players.
\item $A$ is a set of actions (common across all players).
\item $H$ is a set of choice (non-terminal) nodes.
\item $Z$ is a set of outcomes.
\item $\chi \in H \rightarrow \mathcal{P}(A)$ is the action function, assigning each choice node to the set of available actions at that node.
\item $\rho \in H \rightarrow N$ is the player function, assigning each choice node to a player.
\item $\sigma \in H \times A \rightarrow H \cup Z $ is the successor function, assigning each pair of choice node and action (available at that choice node) to a choice node or outcome.

There are a few constraints on $\sigma$ to enforce that the game is a tree. First, it is injective. Second, $\sigma(h, a)$ is only defined if $a\in \chi(h)$. Lastly, $\sigma$ must organize the choice nodes and outcomes in a single connected component: there can only be one root, as opposed to a forest\footnote{This was implicit in the original definition.}.

\item $u \in N \times Z \rightarrow \mathbb{R}$ is the utility function, assigning each player and outcome to a payoff. Since we are only interested in pure strategies, payoffs are ordinal, not cardinal, meaning that it only matters how they compare, but their absolute values do not. Literature thus also models utilities with an order relation, with no explicit payoff numbers. Using numbers improves readability and makes it easier to talk about examples.

\item $I$, the information partition, is an equivalence relation on $H$ that is compatible with the player function as well as with the action function. By convention the equivalence relation is expressed in terms of information sets, which are partitions $(I_{(i,j)})_{(i,j)\in N\times \mathbb{N}}$, one for each player and integer index. Formally, it fulfils for any $i$ and $j$ that $h \in I_{i,j} \implies \rho(h) = i$ as well as $h, h' \in I_{i,j} \implies \chi(h) = \chi(h')$. 
\end{itemize}
\end{definition}

\subsection{Notations}

Since the definition of the solution concept in this paper involves large formulas, we use a few notations that make them easier to read.

For payoffs, we write $u_i(z)$ rather than $u(i, z)$ to denote the payoff of player $i$ at outcome $z$ for convenience.

For navigation in the tree, we write $h\oplus a$ for $\sigma(h, a)$.

We also write $I_{i,j} \oplus a=\cup_{n\in I_{i,j}} \{ n \oplus a \} $ for an information set $I_{i,j}$ to denote all its successor nodes for a specific action. This is consistent with extending the application of a function to a set of inputs, as is common in algebra literature.

Finally, we use the letter $\delta$ to denote the set of all the descendants of a node in the tree induced by $\sigma$.

\begin{definition}[Descendant function]
Given a game with imperfect information $(N, A, $ $H, Z, \chi, \rho, \sigma, u, I)$, the descendant function maps any choice node or outcome to the set of all its descendants, i.e., in its transitive and reflexive closure via $\sigma$. For $h\in H \cup Z$,
$$
\delta(h) = \{ h' \in H \cup Z | h=h' \vee\exists k\in \mathbb{N}^*, \exists (a_i)_{i=1..k} \in A^k,
h' = h \oplus a_1 ... \oplus a_k\}
$$
\end{definition}


Whenever we have two nodes $h \neq h'$ such that $h' \in \delta(h)$, we introduce the notation $h'_h$ as the only action $a$ such that $h' \in \delta(h\oplus a)$. It always exists because if $h'$ is a descendant of $h$, then it is in one of its subtrees.

Following standard mathematical practice, we also use the notation $\delta(I_{i,j})$ where $I_{i,j}$ is an information set, to denote the set of all descendants of all nodes in this information set.

\subsection{Canonical form of a game with imperfect information and perfect recall}

\label{section-canonical}

Some games are allowed by the definition given in Section \ref{section-definition-game}, but contain subtrees that cannot be reached, under pure strategies by the definition of the game itself\footnote{These subtrees could still be reached in the presence of randomness, when considering behavioral strategies: a dice is thrown at each information set, and may lead to different actions for nodes in the same information set.}.

This happens when a node is a descendant of another node that is in the same information set: if $h'\in \delta(h\oplus a)$ for some $a\in A$, then the nodes in $\delta(h'\oplus b)$ for any $b\neq a$ are never reached for any pure strategies, i.e., for any assignment of information sets to choices of actions.

Such subtrees can simply be pruned out with no changes to the semantics of the game\footnote{If we only consider pure strategies.}. We call this the canonical form.

\begin{definition}[Canonical form of a game]
A game with imperfect information is a tuple $(N, A, H, Z, \chi, \rho, \sigma, u, I)$ is in canonical form if no information set contains two node that are different, and one is the descendant of the other: 

$$\forall I_{i,j}\in I, \forall n \in I_{i,j}, \delta(n) \cap I_{i,j} = \{ n \} $$

A game can be put in canonical form by pruning all nodes $n$ that are the strict descendants of another node $m$ in their information set, replacing such a node n by its subtree $n \oplus n_m$.



\end{definition}

A game with perfect recall always is in canonical form. This is because if we have a node and one of its descendants in the same information set, the agent does not know whether they already made this decision or not: we have imperfect recall. The converse is not true: a game in canonical form may have imperfect recall, i.e., it can be that an agent does not remember \emph{another} decision that they already made.

We assume in the remainder of this paper that all the considered games have been made canonical.

\section{Games with imperfect information and Minkowski spacetime}
\label{section-spacetime}
We now give an interpretation of a subcategory of games in extensive form with imperfect information and perfect recall in the context of special relativity. Specifically, we show that, if we consider agents making decisions across spacetime, we can model this setup by building a game in extensive form with imperfect information and perfect recall (and thus in canonical form).

\subsection{Minkowski spacetime}

We consider a Lorentzian manifold in which the metric tensor is constant, i.e., it is the same at all positions across space and time. In this setting, the distance between two points is independent of the observer, assuming that inertial timeframes are obtained from each other by Lorentz transformations. This is known as Minkowski spacetime \citep{Minkowski1908}.

We are thus looking at a vector space $\mathbb{R}^n$ endowed with a non-degenerate, symmetric bilinear form (an $n\times n$ matrix).  We assume a metric signature $(n-1,1)$, having in mind that, in practice, this is commonly $(3,1)$. The first $(n-1)$ coordinates are known as space, the last one as time.

Given two events (vectors) in spacetime (our vector space), we can calculate their distance in spacetime with the bilinear form. The sign of this distance allows us to classify pairs of unequal vectors into one of two cases:

\begin{itemize}
\item either they are \emph{timelike}-separated, meaning that any observer (inertial timeframe) would see these two events occur after one another, in always the same order. We can thus say that one of the two events \emph{precedes} the other, because its time coordinates are smaller than the other event's time coordinates in any inertial timeframe.
\item or they are \emph{spacelike}-separated, meaning that the order in which these events occur depends on the observer. No signal can be sent between these two events, because it would involve faster-than-light travel, equivalent to travelling back in time for some other observer.
\end{itemize}

\subsection{Decision points}

Now that we have a Minkowski spacetime, we can place in it what we can call decision points at which agents in a set $\hat{N}$ make decisions taken from a global set of actions $\hat{A}$.

We denote a decision point $\hat{I}_{i,j}$ in which agent $i\in \hat{N}$ makes the decision, and $j\in\mathbb{N}^*$ indexes all the decision points at which agent $i$ makes the decision. We call $\hat{I}$ the set of all decision points.

We denote $\hat{\chi}(\hat{I}_{i,j})\subseteq \hat{A}$ the set of possible actions that agent $i$ can make at decision point $\hat{I}_{i,j}$.

Figure \ref{figure-decision-points} shows an example with six decision points located in spacetime and four agents.

\begin{figure}
\centering\includegraphics[width=0.7\textwidth]{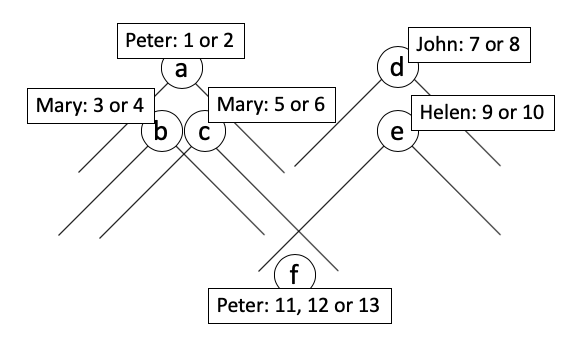}
\caption{Six decisions points, located in Minkowsky spacetime. Lines indicate their future light cones to make timelike and spacelike separation visible.}
\label{figure-decision-points}
\end{figure}

Each decision point has a location in spacetime. Given two decision points, we can thus say that they are either spacelike-separated, or timelike-separated based on their coordinates -- or that they coincide if the distance is 0. The precise locations are irrelevant to us. We use the partial order $\prec$ to denote timelike-separation, with the convention that two decision points at the same location are considered spacelike separated.

Figure \ref{figure-dag} shows, for the same example, the DAG making the timelike-separation partial order explicit.

\begin{figure}
\centering\includegraphics[width=0.5\textwidth]{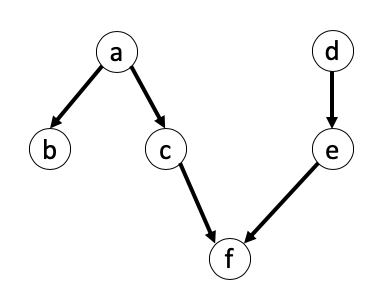}
\caption{The directed acyclic graph representing the timelike separation partial order $\prec$.}
\label{figure-dag}
\end{figure}

It is thus possible to list all decision points in a certain order that is compatible with their spacetime coordinates, i.e., whenever two decision points $I$ and $J$ are are such that $I\prec J$, then $I$ must precede $J$ in the list. We also require, to keep the list clean, that decision points that are co-located must also appear together in the list. Such a list is not unique, because spacelike-separation gives a few degrees of freedom, but it always exists. This is because, if it were impossible to build such a list because of a cycle in the order of events, we would be looking at a closed timelike curve, which does not exist in Minkowski spacetime\footnote{but may exist in general relativity}.

We denote this list of decision points $(\hat{I}_1, \hat{I}_2, ..., \hat{I}_n)$ where $n$ is the total number of decision points. Thus, when an decision point has one index, we mean its absolute position in the ordered list selected in the former paragraph. When it has two indices, we mean that the first index is the agent, and the second index its (arbitrary) index within the agent's decision points.

In our example, the list of decision points we take is $a, b, c, d, e, f$. Note that other lists compatible with timelike separation would be as acceptable: $a, d, b, c, e, f$ or $d, e, a, b, c, f$ for example.

It is crucial to distinguish between two orders: a partial order with timelike-separation semantics relative to spacetime, and a total order based on the selected ordered list of decision points, the latter being a superset of the former. We will denote the former $\prec$ and the latter $<$.

\subsection{Contingency coordinates}

Each decision point has, in addition to its spacetime coordinates, contingency coordinates. The contingency coordinates of a decision point are the actions that must be taken at previous, timelike-separated decision points for this decision point to actually be reached. In other words, the ability to make that decision must be caused\footnote{The notation of causality, in this paper, coincides with timelike-separation. $A$ causes $B$ if $A\prec B$. We carefully separate the notion of causality from that of counterfactuality.} by one specific set of events at decision points preceding it in spacetime.

For example, Peter at a first decision point $I_1$ may pick action $a$ or $b$, and John at a second decision point $I_2\nsucc I_1$, may pick $c$ or $d$. If Peter has picked $a$ and John $c$, then Mary at a third decision point $I_3$ such that $I_3\succ I_1$ and $I_3\succ I_2$ can pick $e$ or $f$. We say that the contingency coordinates of Mary's decision point are $(a, c)$, because Mary is only given this choice in case Peter picked $a$ and John picked $c$, and otherwise not.

The contingency coordinates of decision point $\hat{I_k}$ are thus an assignment of an action $a_{k,l}$ to decision points $\hat{I_l}<\hat{I_k}$ (so $a_{k,l}$ is only defined for $l < k$), but with two constraints:
\begin{itemize}
\item Only those decision points $\hat{I_l}$ such that $\hat{I_l}\prec\hat{I_k}$ are assigned an action. The others are assigned a dummy value $a_{k,l}=\perp$, which is only a notational convention.
\item A decision point $\hat{I_l}$ is only assigned an action $a_{k,l}$ if $\forall 1\le m \le l-1, (a_{l, m} \neq \perp \implies a_{l,m} = a_{k,m})$. Otherwise, $a_{k,l}=\perp$.
\item Conversely, if a decision point $\hat{I_l}\prec\hat{I_k}$, and $\forall 1\le m \le l-1, (a_{l, m} \neq \perp \implies a_{l,m} = a_{k,m})$, then $a_{k,l}$ must be defined, i.e., $a_{k,l}\neq \perp$ .
\end{itemize}

The contingency coordinates of all decision points can be represented on paper as a two-dimensional triangle of actions. Figure \ref{figure-triangle} shows an example of how contingency coordinates can look like in our running example. It can be seen that these coordinates are consistent. For example, the contingency coordinates of $f$ are $(2, \perp, 5, 7, 10)$, meaning that Peter makes a decision at $f$ only if he decided 2 at $a$, Mary decided 5 at $b$, John decided 7 at $d$ and Helen decided 10 at $e$.

Let us detail the row with the contingency coordinates of $f$. Column $a$ must have a non-dummy action, because $a, f$ fulfil the criterion (a universal quantifier on the empty set is always true). Column $b$ must have a dummy action, because $b, f$ do not fulfil the criterion: in column $a$ of decision point $b$, there is a 1, but there is a 2 in column $a$ of decision point $f$. Column $c$ must have a non-dummy action, because $c, f$ fulfil the criterion: the column $a$ of decision points $c$ and $f$ match, and column $b$ of decision point $c$ has a dummy action. Column $d$ must have a non-dummy action, because $d, f$ fulfil the criterion: all columns of decision point $d$ have a dummy action. Column $e$ must have a non-dummy action, because $e, f$ fulfil the criterion: the column $d$ of decision points $e$ and $f$ match, and column $a,b,c$ of decision point $e$ have a dummy action).

\begin{figure}
\centering\includegraphics[width=\textwidth]{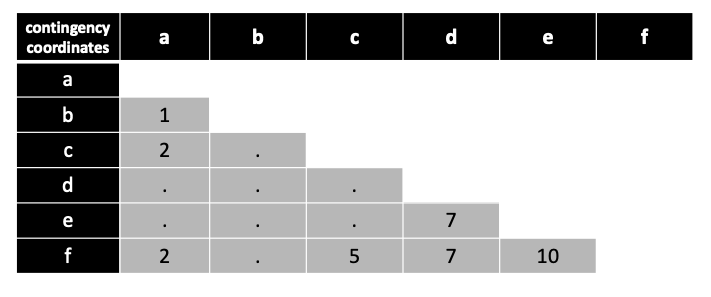}
\caption{The contingency coordinates of our six decision points. Note that the contingency coordinates are additional data to be supplied, i.e., the contingency coordinates must be provided as part of the setup and cannot be inferred from the partial order DAG or the spacetime positions.}
\label{figure-triangle}
\end{figure}

\subsection{Possible histories}

We now look at all possible worlds that can be instantiated from the decision points, i.e., considering all possible actions that can be taken.

A complete history is an assignment of actions to decision points, which we denote $(h_m)_{1\leq m\leq n}$, allowing for some decision points to be unassigned ($\perp$). This happens when a decision is not happening in the considered possible world, because the past does not cause making that decision.

We model the agent's preferences with a function $\hat{u}$ mapping each history and agent to a number with ordinal semantics, which is equivalent to a total order relation over histories for each agent.

An incomplete history is an assignment of actions to the first $k<n$ decision points, but not to the remaining ones. We use history for either complete or incomplete histories.

Like contingency coordinates, a history must respect constraints, i.e., when the past history of a decision point (list of actions taken at previous decision points in the total order) matches its contingency coordinates, this decision point must be assigned an action, and conversely.

Formally, an incomplete history $(h_1, ..., h_{l-1})$, up to decision point $I_l$, matches its contingency coordinates $(a_{l,1}, $ $ ..., a_{l, l-1})$ if $\forall 1 \leq m \leq l-1, (a_{l,m}\neq \perp \implies h_m=a_{l,m})$. Note that this is the exact same condition as the consistency between the contingency coordinates of two decision points, with the history acting as the latter decision point.

A history is consistent if, all its prefix histories $(h_1, ..., h_{m-1})$ with $1\le m \le n$ match the contingency coordinates of $\hat{I}_m$ if and only if $h_m\neq \perp$. It follows directly from this definition that, if a history is consistent, then all its prefixes are consistent as well.

We denote the set of all consistent complete histories $\hat{Z}$, and the set of all consistent incomplete histories $\hat{H}$, but taking only those $(h_1, ..., h_{m-1})$ that match the contingency coordinates of $I_m$, that is, such that a decision has to actually be made at the next step. \footnote{This is an easier way to express that we build an equivalence relation on consistent (complete or incomplete) histories by trimming $\perp$ assignments and comparing the remaining non-dummy actions, and that we quotient the set of consistent histories by this equivalent relation.}.

In our example, the consistent complete histories are

\begin{itemize}
\item $1, 3, \perp, 7, 9, \perp$
\item $1, 3, \perp, 7, 10, \perp$
\item $1, 3, \perp, 8, \perp, \perp$
\item $1, 4, \perp, 7, 9, \perp$
\item $1, 4, \perp, 7, 10, \perp$
\item $1, 4, \perp, 8, \perp, \perp$
\item $2, \perp, 5, 7, 9, \perp$
\item $2, \perp, 5, 7, 10, 11$
\item $2, \perp, 5, 7, 10, 12$
\item $2, \perp, 5, 7, 10, 13$
\item $2, \perp, 5, 8, \perp, \perp$
\item $2, \perp, 6, 7, 9, \perp$
\item $2, \perp, 6, 7, 10, \perp$
\item $2, \perp, 6, 8, \perp, \perp$
\end{itemize}

In our example, the consistent incomplete histories are

\begin{itemize}
\item $\emptyset$
\item $1$
\item $1, 3, \perp$
\item $1, 3, \perp, 7$
\item $1, 4, \perp$
\item $1, 4, \perp, 7$
\item $2, \perp$
\item $2, \perp, 5$
\item $2, \perp, 5, 7$
\item $2, \perp, 5, 7, 10$
\item $2, \perp, 6$
\item $2, \perp, 6, 7$
\end{itemize}

\subsection{Construction of the game with imperfect information}

The decision points can then be seen as a partition of $\hat{H}$, in which an incomplete history of size $m-1<n$ is mapped to information set $\hat{I}_m$.

We define the function $\hat{\rho}$ mapping an incomplete history $(h_1, ..., h_{m-1})$ to the agent in $\hat{N}$ who decides at decision point $\hat{I}_m$.

We define the function $\hat{\sigma}$ mapping an incomplete history, possibly empty if $m=1$, $(h_1, ..., h_{m-1})\in \hat{H}$ and an action $a\in \hat{\chi}(\hat{I}_m)$to the (incomplete or complete) history $(h_1, ..., h_{m-1}, a, \perp, ..., \perp)\in\hat{H}\cup\hat{Z}$, where we add as many $\perp$ actions as necessary for the resulting incomplete history, of size $l-1>m-1$, to match the contingency coordinates of $\hat{I}_l$, or all the way to a complete history if no such $l$ exists. The resulting history is thus consistent by construction. $\hat{\sigma}$ is injective by construction, because the previous incomplete history and action can be reconstructed straightforwardly. The unique root of the so obtained tree is the empty history.

For example, $\hat\sigma((1, 3, \perp), 8) = (1, 2, \perp, 8, \perp, \perp)$.

$(\hat{N}, \hat{A}, \hat{H}, \hat{Z}, \hat{\chi}, \hat{\rho}, \hat{\sigma}, \hat{u}, \hat{I})$ is then a game in extensive form with imperfect information, on which we can compute equilibria (Nash, PTE, etc). This game has a natural interpretation in which the players are agents located in Minkowski spacetime making decisions. The information sets are interpreted as the situations in which decisions are spacelike-separated and thus no signal can be sent between two agents.

If we require that the same agent cannot make a decision at two spacelike-separated (including colocated) decision points, then the game has perfect recall. This requirement is natural and corresponds to the fact that the timeline\footnote{the set of all spacetime coordinates ever occupied by an agent} of an agent must follow a timelike curve, that no observer in Minkowski spacetime can see the same agent at two spacelike-separated positions\footnote{This could, however, occur in general relativity in the presence of closed timelike curves, which is out of the scope of this paper.}, and that an agent can only make one decision at any time\footnote{We mean here his own time.}.

\begin{figure}
\centering\includegraphics[width=\textwidth]{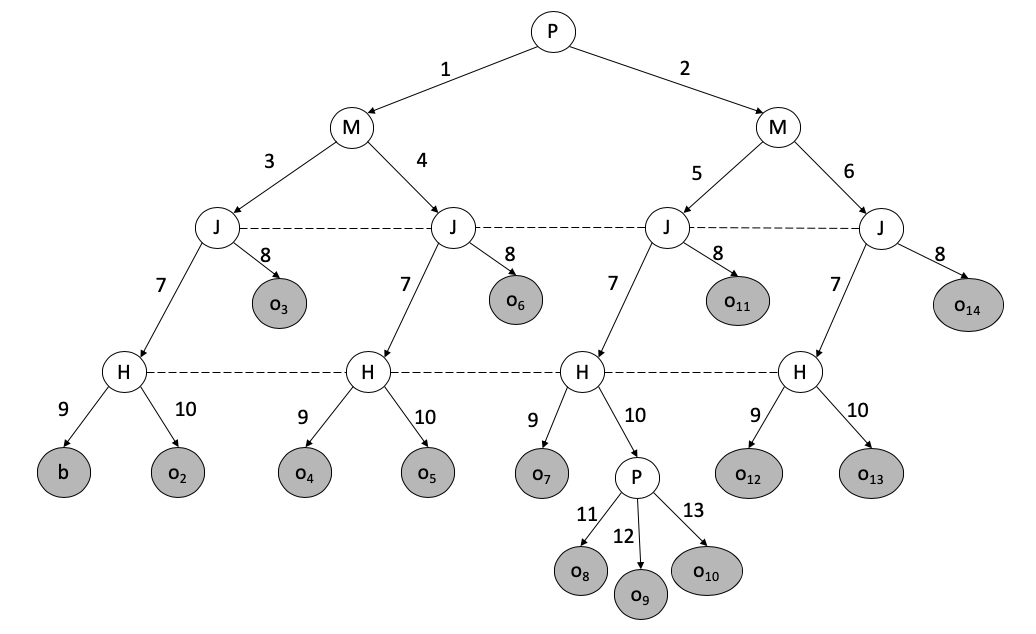}
\caption{The game associated with the example spacetime setup with six decision points and four agents. Each consistent complete history corresponds to an outcome (gray node) and each consistent incomplete history corresponds to a choice node (white node). The letters are the initials of the agents making decisions. The actions are indicated on the edges. Information sets are shown with dashed lines. The outcomes are numbered arbitrarily. The preferences of the agents are not shown, but would appear as tuples of four numbers at each outcome with ordinal semantics.}
\label{figure-game}
\end{figure}

Not all games with imperfect information and perfect recall can be obtained in this way. Further work includes characterizing this subclass of games for which an underlying semantics involving agents making decisions in Minkowski spacetime exists.

In the rest of this paper, we do have in mind that subclass of games as our prominent use cases, however, the generalized Perfectly Transparent Equilibrium is defined on all canonical games, even if they cannot be interpreted with agents located in Minkowski spacetime.

\section{Computation of the Perfectly Transparent Equilibrium}

We now consider any canonical game in extensive form with perfect information, and give the algorithm for computing the Perfectly Transparent Equilibrium. This game may or may not have been built from an underlying decision setup in Minkowski spacetime, and may or may not have perfect recall. However, we do assume that it is in canonical form.

The construction of the (potential) Perfectly Transparent Equilibrium is done by iterative elimination of those outcomes that cannot be the result of the game, assuming that the players perfectly predict them and are rational in all possible worlds.

We need to start with an important clarification. In the Nashian resolution of a game with imperfect information, a pure strategy corresponds to an assignment of an action to \emph{each} information set. In a Nashian setup, this makes sense because these assignments to information sets that are not actually reached model counterfactuals \citep{Lewis1973} under unilateral deviations: if other agents had picked a different pure strategy, these information sets would have been reached and the corresponding actions would have been chosen.

In a non-Nashian setup, such an assignment to all information sets makes no sense in reality, because given an outcome, decisions are only actually made at the information sets on the equilibrium path\footnote{In the spacetime setup, the semantic interpretation of this is that a decision is only made at a decision point if its contingency coordinates match the past history of decisions. A decision must be \emph{caused} by the conjunction of previous decisions in the sense that it lies in the future light cone of the past decisions' having been made in a specific way.}. Indeed, since we do not assume unilateral deviations, the counterfactual structure is fundamentally different. For example, the action chosen at a specific information set may be correlated to the other player's strategy rather than be independent.

A pure strategy in a setup with perfect prediction only maps chosen actions to those information sets that are on the equilibrium path leading to the equilibrium outcome. This map is completely determined by the outcome. A deviation from a decision is formally modelled with explicit counterfactual functions with a possible worlds semantics \citep{Fourny2017b}: the impact of a deviation to a different action, at an information set, is analyzed by looking at the closest world in which this alternate action is taken. This alternate possible world, with a potentially different outcome, may thus map actions to a different group of information sets, specifically, only those on the path leading to this alternate outcome.

As \citet{Dupuy2000} points out on centipede games, this avoids the backward induction paradox, because the reasoning leading to the equilibrium does not involve any reasoning at nodes outside the equilibrium path. In other words, the path leading to the Perfectly Transparent Equilibrium is causally consistent. A forward induction reasoning flags outcomes as logically impossible under our assumptions, while advancing on a path until only one final outcome remains. 

Figure \ref{figure-example-1} shows a game in extensive form played by two agents, Peter and Mary, with no ties and with imperfect information and, for the sake of illustration, imperfect recall. Perfect recall is not required in general. The computation does not require the absence of ties, but our theorems of uniqueness and Pareto-optimality require that there are no ties.

\begin{figure}
\centering\includegraphics[width=\textwidth]{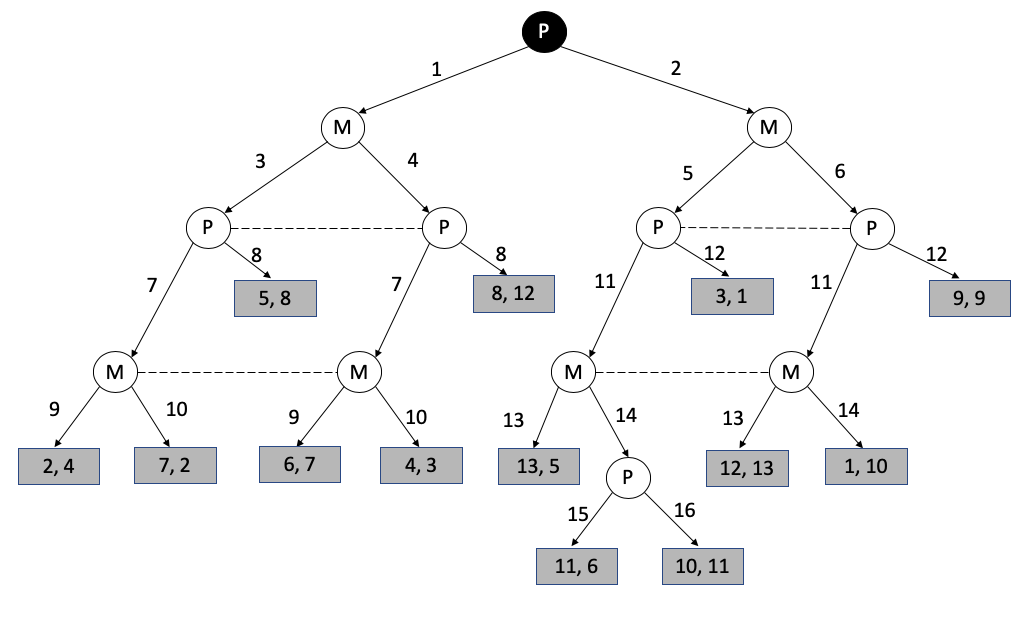}
\caption{The game used as an illustration of the computation of the PTE. This game has imperfect recall because Mary choosing between 13 and 14 cannot remember whether she previously picked 5 or 6. There are four information sets with more than one choice node. The letters P and M indicate whether Peter or Mary is making the decision. The payoffs are given as pairs: the first number is Peter's payoff, the second one is Mary's payoff. Black choice nodes mark reached information sets.}
\label{figure-example-1}
\end{figure}

We start with an initial set of outcomes containing all the outcomes of the game, $\mathcal{S}_0=Z$. For each $k$, we eliminate more outcomes from $\mathcal{S}_k$, building $\mathcal{S}_{k+1}$. In our example, $\mathcal{S}_0$ contains all of the thirteen outcomes. We will express subsequent sets of outcomes by graying out outcomes directly on the figure.

\subsection{Reached information sets}

Once it has been established that any outcome outside $\mathcal{S}_k$ is impossible, for some $k\ge 0$, it is possible to derive the information that some information sets must be reached by the game: no matter what outcome would be reached, the information set intersects with the path leading to it, meaning that reaching this outcome involves making a decision at this information set no matter what.

\begin{definition}[Reached information set]
Given a game with imperfect information  $(N, A, H, Z, \chi, \rho, \sigma, u, I)$ in canonical form, and given the set $\mathcal{S}_k$ of surviving outcomes at step $k\in\mathbb{N}$, an information set $I_{i,j}$ is reached at step $k$ if:

$$\mathcal{S}_k \subseteq \delta(I_{i,j})$$

We denote $\mathcal{R}_k$ the set of all such information sets at step $k$.

Reached information sets grow with each step, i.e., the algorithm is nothing else than a forward induction going from the root all the way to at most one surviving outcome.

In our example, we mark reached information sets in black. At step 0 (Figure \ref{figure-example-1}), only the information set containing the root (a singleton) is reached: all outcomes in $\mathcal{S}_0$ are in its descendance.

\subsection{Preemption}

Knowing that some information sets are guaranteed to be reached, we can eliminate outcomes that cannot possibly be known as the equilibrium if the agents are rational in all possible worlds.

Such an outcome $o\in\mathcal{S}_k$ is characterized by the fact that for some agent making a decision at some reached information set, a deviation from $o$ by picking a different action that the one leading to $o$ \emph{guarantees} to this agent a minimum payoff that is greater than the one she would obtain at $o$.

Preempted outcomes are characterized as those that do not Pareto-dominate the \emph{maximins} of the agents, i.e., the highest payoff they can guarantee themselves by a smart choice of actions. If an outcome, for some agent, is worst than its maximin, and the agent knows that this is the result of the game, then this is incompatible with their rationality: being rational, they would have picked a different action. This is a proof by \emph{reductio ad absurdum} that this outcome cannot be reached by the game. Maximin domination is very similar to the normal-form version of the Perfectly Transparent Equilibrium.

Preemption can only be done by deviating at information sets that are reached by the game. This is because otherwise, a preemptive action could itself be preempted, as originally pointed out by \citet{Dupuy2000} in his seeding work. If a preemption is carried out at an information set that is known to be reached, then no subsequent preemption can invalidate this preemption reasoning later on. This dependency resolution is addressed with the forward induction mechanism. This is the exact same idea as in the Perfect Prediction Equilibrium: the reasoning only involves nodes that are on the equilibrium path, which avoids backward induction paradoxes \citep{BIP}, because the justification of each choice is never motivated by what would happen at an information set that is actually not reached.

\end{definition}
\begin{definition}[Preempted outcomes]
Given a game with imperfect information $(N, A, $ $H, Z, \chi, \rho, \sigma, u, I)$ in canonical form, and given the set $\mathcal{S}_k$ of surviving outcomes at step $k\in\mathbb{N}$ as well as the set of reached information sets $\mathcal{R}_k$ at step $k$, we say that an outcome $z\in\mathcal{S}_k$ is preempted by player $i\in N$ at a reached information set $I_{i,j}\in\mathcal{R}_k$ if:

$$u_i(z) < \max_{
\begin{matrix}
a\in \chi(I_{i,j}) \\ 
\text{s.t.} \\
\mathcal{S}_k \cap \delta(I_{i,j} \oplus a)\neq \emptyset 
\end{matrix}
} \min_{
\begin{matrix}
z'\in \mathcal{S}_k \cap \delta(I_{i,j} \oplus a)
\end{matrix}
} u_i(z') $$

We call $\mathcal{P}_k$ the set of all outcomes that are preempted for some reached information set $I_{i,j}$ by some player $i\in N$.

\end{definition}

Note that, when a preemption of an outcome $z$ occurs, the argmax $a\in\chi(I_{i,j})$ is never $z_{I_{i,j}}$\footnote{This notation generalizes the notation $z_n$ where $n$ is a choice node. It is consistent because the game is in canonical form and for any outcome, only one choice of action in $I_{i,j}$ can cause $z$.} because $u_i(z)$ is never strictly smaller than the minimum in the subtree to which $z$ belongs, $z_{I_{i,j}}$.

At step 0 and for games in normal form, this definition coincides with that of individual rationality; it can thus be seen as a generalization of individual rationality to games with imperfect information.

\begin{figure}
\centering\includegraphics[width=\textwidth]{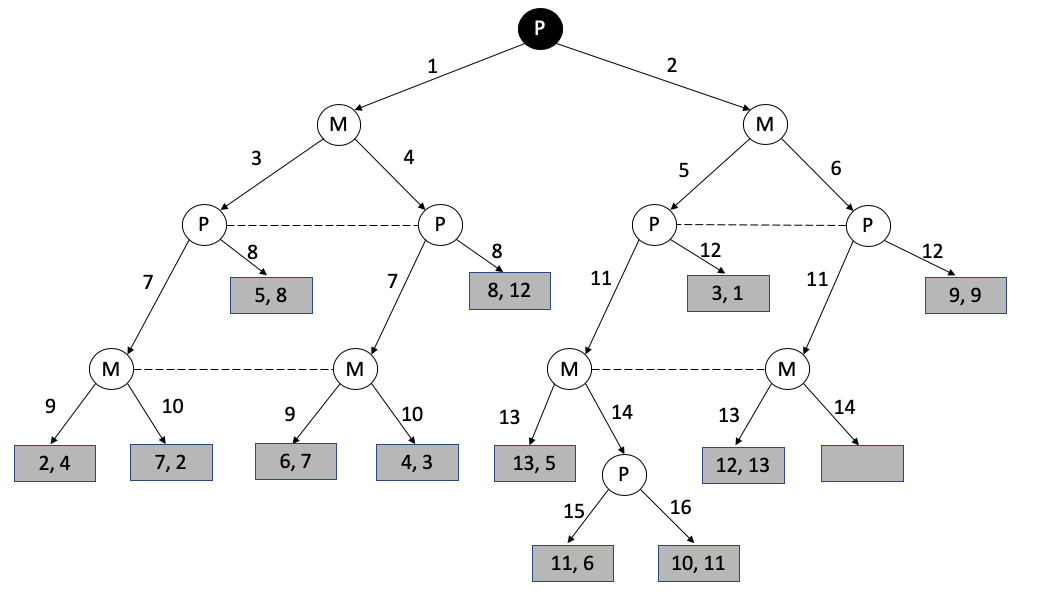}
\caption{At step 0, there is one outcome that is preempted: (1, 10), grayed out here.}
\label{figure-example-2}
\end{figure}

Figure \ref{figure-example-2} illustrates preemption on our example: outcome $(1, 10)$ is preempted, because Peter, if he chooses action 1 at the root information set, has a guaranteed payoff of at least 2 no matter what other decisions are made. His maximin is 2. Since $1<2$, $(1, 10)$ cannot be possibly reached under perfect prediction and rationality in all possible worlds. If Peter knew in advance that $(1, 10)$ was going to be the outcome, then he would deviate and pick action 1, leading to a causal inconsistency as $(1, 10)$ can only be caused by action 2. Again, the notion of causality in the perfect prediction framework is nothing else than a matter of future light cones, i.e., Grandfather's paradoxes must be avoided and the equilibrium path must be causally consistent.

\subsection{Recursion}

Knowing the preempted outcomes at step $k$, one deduces the new set of surviving outcomes, $\mathcal{S}_{k+1}$.

\begin{definition}[Surviving outcomes]
Given a game with imperfect information $(N, A, $ $H, Z, \chi, \rho, \sigma, u, I)$ in canonical form, and given the set $\mathcal{S}_k$ of surviving outcomes at step $k\in\mathbb{N}$ as well as the set of reached information sets $\mathcal{R}_k$ at step $k$ and the preempted outcomes $\mathcal{P}_k$ at set $k$, the surviving outcomes at step $k+1$ are:

$$\mathcal{S}_{k+1} = \mathcal{S}_k \setminus \mathcal{P}_k$$

The recursion is initialized with $\mathcal{S}_0=Z$.
\end{definition}

In our running example, the surviving outcomes are those not grayed out ($\mathcal{S}_1)$) on Figure \ref{figure-example-2}. In the next step, there is still only one reached information set at the root, and Peter's new maximin is 3. Thus, outcome $(2, 4)$ is preempted: knowing that $(1, 10)$ has previously been proven to be impossible, Peter would deviate to action 2, breaking the causal bridge. 

\begin{figure}
\centering\includegraphics[width=\textwidth]{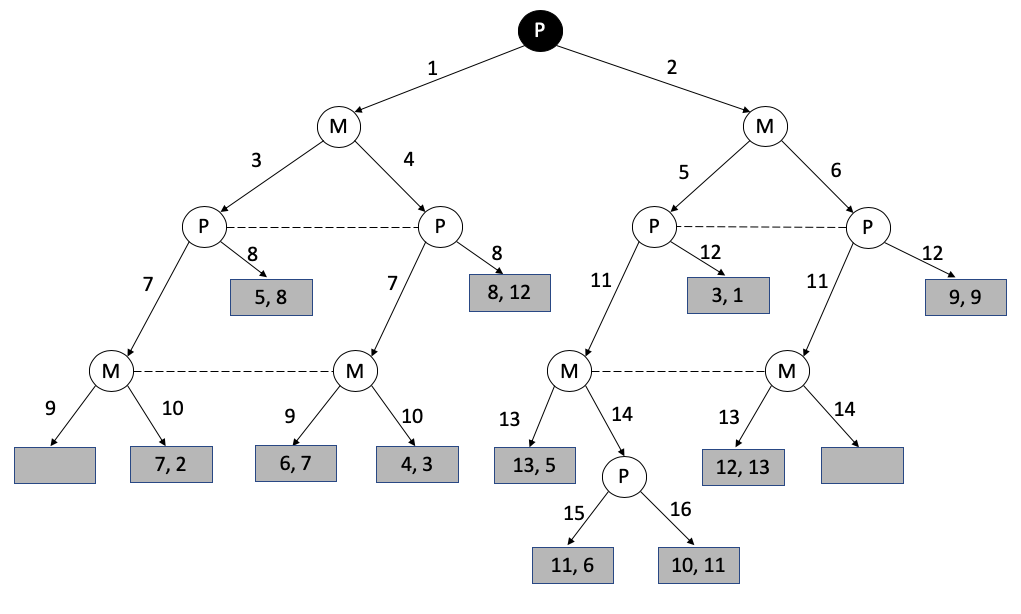}
\caption{This figure shows the outcomes at step 2. One outcome was preempted in step 1: (2, 4), grayed out here. The new maximin of Peter is 3.}
\label{figure-example-3}
\end{figure}
 
Figure \ref{figure-example-3} shows the set of outcomes $\mathcal{S}_2$. At step 2, Peter's maximin is 9. Outcome $(3, 1)$ is preempted.

\begin{figure}
\centering\includegraphics[width=\textwidth]{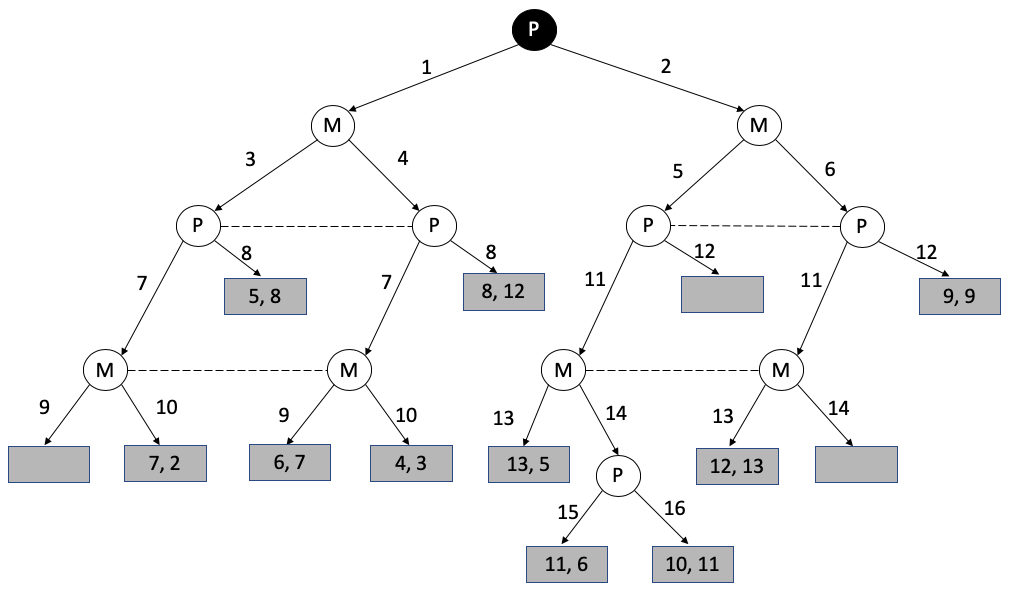}
\caption{This figure shows the outcomes at step 3. One outcome was preempted in step 2: (3, 1), grayed out here. The new maximin of Peter is 9.}
\label{figure-example-4}
\end{figure}

At step 3, shown on Figure \ref{figure-example-4}, the entire subtree below action 1 gets eliminated\footnote{Technically, the outcomes are eliminated, but it implies that the whole subtree cannot be reached either.}, because it is dominated by Peter's maximin of 9: Peter prefers action 2 no matter what.

Two more reached information sets appear at step 4 as shown on Figure \ref{figure-example-5}. The new maximin is (10, 9) for respectively Peter and Mary. Outcomes $(12, 5)$ and $(11, 6)$ are below Mary's maximin (9) and are preempted. Outcome $(9, 9)$ is below Peter's maximin (10) and is preempted. A new information set is reached.

Step 5 is shown on Figure \ref{figure-example-6}. The new maximin is (12, 13).

\begin{figure}
\centering\includegraphics[width=\textwidth]{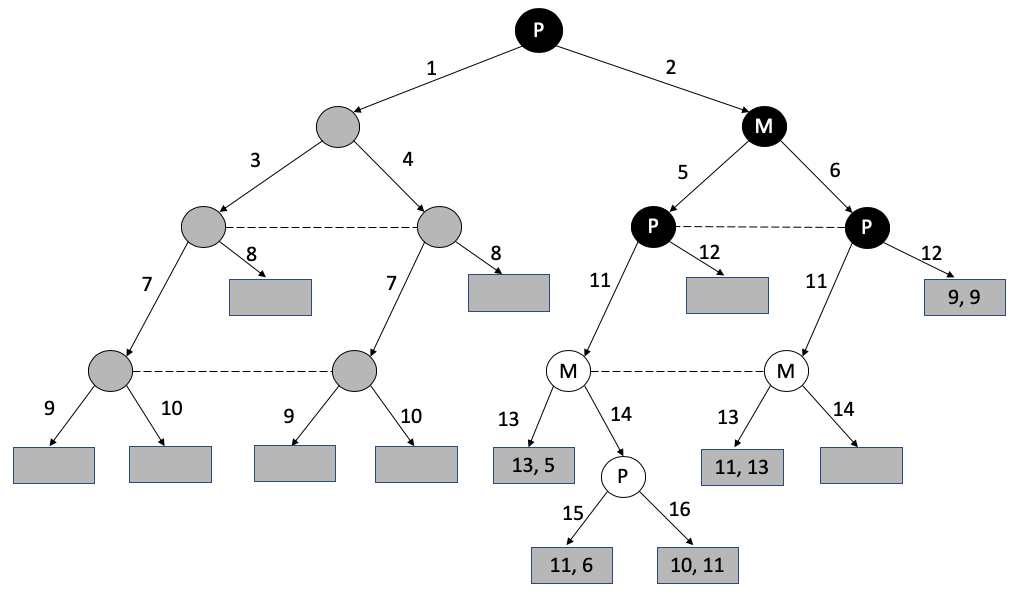}
\caption{This figure shows the outcomes at step 4. All outcomes in the left subtree were preempted in step 3, because they are all offering payoffs of less than 9 to Peter. The whole subtree is shown grayed out. We now have two more reached information sets: one with Mary picking 5 or 6, and one with Peter picking 11 or 12. Peter's maximin is 10 (guaranteed with action 11) and Mary's maximin is 9 (picking action 6)}
\label{figure-example-5}
\end{figure}

\begin{figure}
\centering\includegraphics[width=\textwidth]{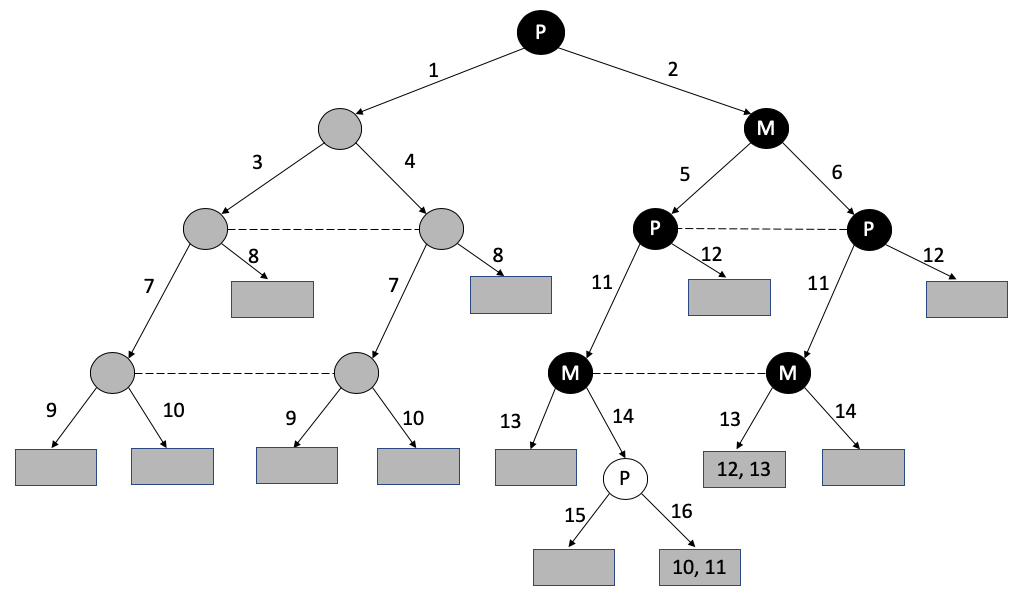}
\caption{This figure shows the outcomes at step 5. Three more outcomes have been eliminated.}
\label{figure-example-6}
\end{figure}

\clearpage

\subsection{The equilibrium}

The equilibrium then consists of the surviving outcomes.

\begin{definition}[Perfectly transparent equilibrium]
Given a game with imperfect information $(N, A, $ $H, Z, \chi, \rho, \sigma, u, I)$ in canonical form, an outcome $z\in Z$ is a perfectly transparent equilibrium if it survives all rounds of elimination:

$$z \in \displaystyle  \bigcap_{k\in \mathbb{N}} \mathcal{S}_k$$
\end{definition}

Figure \ref{figure-example-7} shows the final step with only one outcome left. No more preemption occurs, i.e., all subsequent steps stay the same. $(12, 13)$ is the Perfectly Transparent Equilibrium. This outcome is unique on this example. 

\begin{figure}
\centering\includegraphics[width=\textwidth]{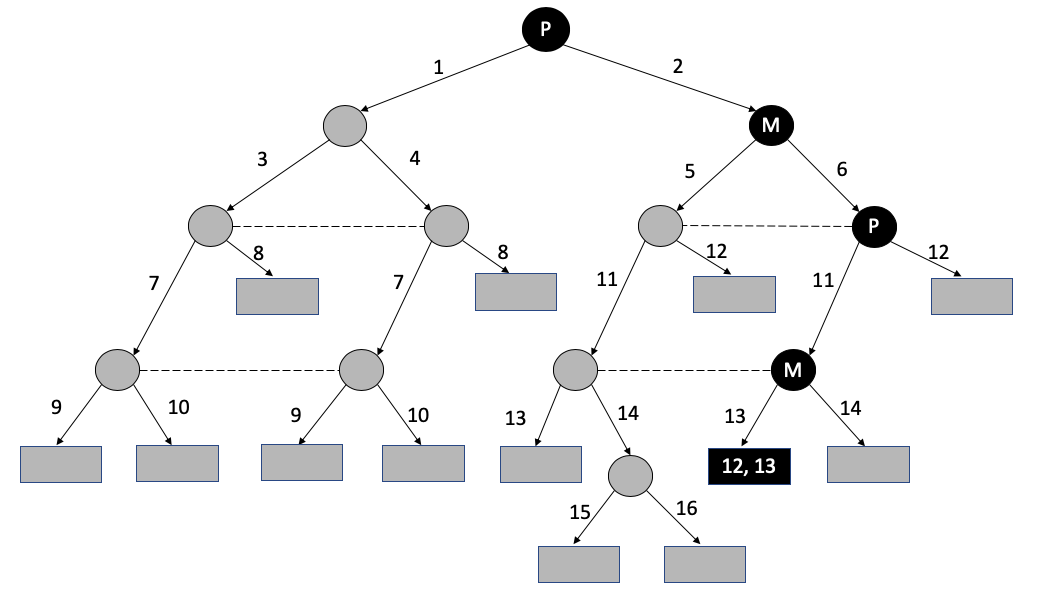}
\caption{The final, steady step with only one surviving outcome: the Perfectly Transparent Equilibrium.}
\label{figure-example-7}
\end{figure}

Actually, this surviving outcomes is always at most unique for games with no ties in the payoffs. -- but it could also not exist.

Note that the strategies of the two players are only assigning actions on the equilibrium path: Peter picks action 2, then Mary picks action 6, then Peter picks action 11, then Mary picks action 13. These decisions are causally consistent: they cause one another, and the anticipation that the final outcome is $(12, 13)$ also does not lead to any deviation. The decisions at any other information sets are undefined\footnote{However, the counterfactual structure is well defined, as explicitly given by the cascade of preemptions. Nowhere in this counterfactual structure is any decision at any other information set than those reached, involved}, as explained earlier.

\begin{theorem}[Uniqueness]
Given a game with imperfect information $(N, A, H, Z, \chi, $ $\rho, \sigma, u, I)$ in canonical form and with no ties, the perfectly transparent equilibrium is unique.
\end{theorem}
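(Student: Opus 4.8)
The plan is to prove the statement in the sense it is actually meant here -- that the set of surviving outcomes $\mathcal{S}_\infty = \bigcap_{k\in\mathbb{N}}\mathcal{S}_k$ contains \emph{at most} one element (the equilibrium being allowed not to exist). Since $Z$ is finite and the sequence $(\mathcal{S}_k)_{k}$ is non-increasing, it stabilizes: there is a $K$ with $\mathcal{S}_k=\mathcal{S}_\infty$ for all $k\ge K$. At such a step the recursion is at a fixed point, so $\mathcal{P}_K=\mathcal{S}_K\setminus\mathcal{S}_{K+1}=\emptyset$; that is, relative to the reached information sets $\mathcal{R}_\infty$ computed from $\mathcal{S}_\infty$, \emph{no} surviving outcome is preempted. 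Concretely, for every reached $I_{i,j}\in\mathcal{R}_\infty$ and every $z\in\mathcal{S}_\infty$ one has $u_i(z)\ge M_{i,j}$, where $M_{i,j}$ denotes the maximin appearing in the definition of preemption.

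First I would assume for contradiction that $|\mathcal{S}_\infty|\ge 2$ and isolate a single information set at which preemption is forced. Consider the longest initial node-path $r=m_0,m_1,\dots,m_t$ from the root that is a common prefix of the paths of \emph{all} surviving outcomes, so that $\mathcal{S}_\infty\subseteq\delta(m_t)$ while these outcomes do not all continue through a single child of $m_t$. As two distinct outcomes survive, this prefix cannot terminate at an outcome, hence $m_t$ is a choice node and there are two actions $a\neq b$ in $\chi$ with $\mathcal{S}_\infty\cap\delta(I_{i,j}\oplus a)\neq\emptyset$ and $\mathcal{S}_\infty\cap\delta(I_{i,j}\oplus b)\neq\emptyset$, where $I_{i,j}$ is the information set containing $m_t$ and $i=\rho(m_t)$. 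The crucial observation is that $I_{i,j}$ is \emph{reached}: indeed $\mathcal{S}_\infty\subseteq\delta(m_t)\subseteq\delta(I_{i,j})$, so $I_{i,j}\in\mathcal{R}_\infty$.

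Then I would combine the no-preemption property at $I_{i,j}$ with the no-ties hypothesis. Let $a^\ast$ attain $M_{i,j}=\max_{a}\min_{z'\in\mathcal{S}_\infty\cap\delta(I_{i,j}\oplus a)}u_i(z')$; by definition there is $z^\ast\in\mathcal{S}_\infty\cap\delta(I_{i,j}\oplus a^\ast)$ with $u_i(z^\ast)=M_{i,j}$. For the other surviving action $b$, every surviving outcome below it has payoff $\ge M_{i,j}$ by no preemption, while the minimum of these payoffs is $\le M_{i,j}$ by the maximality defining $M_{i,j}$; hence that minimum equals $M_{i,j}$ and is attained at some $z^{\ast\ast}\in\mathcal{S}_\infty\cap\delta(I_{i,j}\oplus b)$ with $u_i(z^{\ast\ast})=M_{i,j}$. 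Because the game is in canonical form, no path meets an information set twice, so the subtrees $\delta(I_{i,j}\oplus a^\ast)$ and $\delta(I_{i,j}\oplus b)$ are disjoint; thus $z^\ast\neq z^{\ast\ast}$ yet $u_i(z^\ast)=u_i(z^{\ast\ast})$, contradicting the absence of ties. Hence $|\mathcal{S}_\infty|\le 1$.

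The main obstacle is the middle step: exhibiting an information set that is simultaneously \emph{reached} and \emph{branching} among surviving outcomes, since the preemption inequality is only available at reached information sets. This is exactly what the maximal-common-prefix construction delivers, because its terminal node has \emph{all} surviving outcomes as descendants (forcing its information set to be reached) while being the first place where they diverge (forcing at least two of its actions to carry surviving outcomes). The two remaining ingredients -- that canonical form keeps the action-subtrees disjoint, and that the no-ties assumption forbids two distinct outcomes from sharing a payoff for the deciding player -- are then what convert this branching into the required contradiction.
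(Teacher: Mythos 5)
Your proof is correct, and it rests on the same key construction as the paper's: the lowest common ancestor $m_t$ of the surviving outcomes, whose information set $I_{i,j}$ is reached because $\mathcal{S}_\infty\subseteq\delta(m_t)\subseteq\delta(I_{i,j})$, and which must branch over at least two actions carrying surviving outcomes; canonical form and the no-ties hypothesis then close the argument. Where you differ is the logical wrapper. The paper argues \emph{progress at every step}: for any $k$ with $|\mathcal{S}_k|\ge 2$, the surviving outcome $z$ minimizing $u_i$ over $\mathcal{S}_k$ lies strictly below the minimum of some other surviving subtree at the LCA (no ties), hence $u_i(z)$ is strictly below the maximin and $z\in\mathcal{P}_k$, so $(\mathcal{S}_k)_k$ strictly decreases until it is a singleton or empty. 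You instead argue \emph{at the fixed point}: by finiteness the sequence stabilizes with $\mathcal{P}_K=\emptyset$, and if two outcomes survived, the attained minimum below the maximin action $a^\ast$ and the attained minimum below the other branching action $b$ would both equal $M_{i,j}$ at two distinct outcomes --- distinct because canonical form keeps the action-subtrees of $I_{i,j}$ pairwise disjoint --- contradicting no ties. The two routes buy slightly different things: the paper's version yields a quantitative bonus (at least one outcome is eliminated per round, so the procedure terminates within $|Z|$ steps), while yours never has to identify which outcome gets eliminated and makes explicit the stabilization step that the paper leaves implicit in the phrase ``strictly decreasing until it reaches a singleton or the empty set.'' Both proofs, yours explicitly and the paper's tacitly, assume the game is finite.
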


\begin{proof}[Uniqueness]
The sequence $(\mathcal{S}_k)_k$ is strictly decreasing until it reaches a singleton or the empty set. At every step $k$, if we consider the lowest common ancestor of all surviving outcomes $n$, the information set $I_{i,j}$ it belongs to is reached. Let us consider the player $i$ playing at this information set, and the outcome $z$ with the lowest payoff for this player (which is unique because there are no ties). 

Let us now consider an action $a$ different from $z_n$, so that $\delta(n\oplus a)$ intersects with $\mathcal{S}_k$. Such an action $a$ exists, because otherwise $n$ would not be the lowest common ancestor: $n \oplus z_n$ would be as well. Thus we have $\delta(I_{i,j}\oplus a)\cap \mathcal{S}_k \neq \emptyset$, meaning that the maximum at least involves another subtree.

Since the maximum also considers a different subtree at $n$, this maximum is greater than $u_i(z)$ because we picked this value to be minimal and there are no ties between payoffs for any agent.

Thus, the limit of the sequence $(\mathcal{S}_k)_k$ can only be a singleton or an empty set.\qed
\end{proof}

Finally, this surviving outcome, if it exists, is always Pareto optimal.

\begin{theorem}[Pareto optimality]
Given a game with imperfect information $(N, A, $ $H, Z, \chi, \rho, \sigma, u, I)$ in canonical form and with no ties, the perfectly transparent equilibrium, if it exists, is Pareto optimal in $Z$.
\end{theorem}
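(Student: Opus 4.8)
The plan is to argue by contradiction, exploiting the fact that the preemption threshold at a given step and reached information set is a single number depending only on $\mathcal{S}_k$, the information set and the player -- never on the particular outcome being tested against it. Assume the Perfectly Transparent Equilibrium $z^*$ exists, so that by the uniqueness theorem $\bigcap_{k}\mathcal{S}_k = \{z^*\}$. Suppose, for contradiction, that some $z'\in Z$ Pareto-dominates $z^*$. Because there are no ties, each $u_i$ is injective on $Z$, so $z'\neq z^*$ forces $u_i(z')\neq u_i(z^*)$ for every $i$; combined with weak Pareto domination this upgrades to the strict inequalities $u_i(z') > u_i(z^*)$ for all $i\in N$.

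Since $z'\neq z^*$ and $z^*$ is the only outcome in $\bigcap_k\mathcal{S}_k$, the decreasing sequence $(\mathcal{S}_k)_k$ must drop $z'$ at some finite step: there is a $k$ with $z'\in\mathcal{S}_k$ and $z'\in\mathcal{P}_k$. Preemption then supplies a player $i$ and a reached information set $I_{i,j}\in\mathcal{R}_k$ with
$$u_i(z') < M := \max_{\substack{a\in\chi(I_{i,j})\\ \mathcal{S}_k\cap\delta(I_{i,j}\oplus a)\neq\emptyset}}\ \min_{z''\in\mathcal{S}_k\cap\delta(I_{i,j}\oplus a)} u_i(z'').$$
Now $z^*$ survives every round, hence $z^*\in\mathcal{S}_k$, and since $I_{i,j}$ is reached we have $\mathcal{S}_k\subseteq\delta(I_{i,j})$, so the preemption condition for $z^*$ is well-posed at $I_{i,j}$. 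Chaining the two inequalities gives $u_i(z^*) < u_i(z') < M$, which is precisely the preemption condition for $z^*$ under the same player $i$ at the same reached information set at step $k$. Hence $z^*\in\mathcal{P}_k$ and $z^*\notin\mathcal{S}_{k+1}$, contradicting $z^*\in\bigcap_k\mathcal{S}_k$; therefore no dominating $z'$ can exist and $z^*$ is Pareto optimal.

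The conceptual heart of the argument is the observation that $M$ is independent of the outcome being preempted, so that any surviving outcome lying strictly below the threshold that eliminated $z'$ -- in particular $z^*$, which sits below $z'$ by the domination hypothesis -- must be eliminated at the very same step. I expect the only points needing care to be bookkeeping: confirming that $z'$ is genuinely eliminated at a \emph{finite} step, which I would read off directly from the uniqueness theorem, and pinning down where the no-ties hypothesis is spent, namely both in invoking uniqueness and in converting weak Pareto domination into the strict inequality $u_i(z') > u_i(z^*)$ for the preempting player $i$. There is no deep obstacle here; the difficulty is purely in citing the earlier results in the correct logical order.
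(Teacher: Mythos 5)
Your proposal is correct and follows essentially the same route as the paper's own proof: locate the finite step $k$ at which the Pareto improvement $z'$ is preempted by some player $i$ at a reached information set, observe that the maximin threshold there does not depend on the outcome being tested, and chain $u_i(z^*) < u_i(z') < M$ to conclude $z^*$ would be preempted at the same step, a contradiction. Your additional bookkeeping—invoking uniqueness to guarantee $z'$ is eliminated at a finite step, and spending the no-ties hypothesis to make the domination strict—merely makes explicit what the paper leaves implicit.
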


\begin{proof}[Pareto optimality]
Assuming the PTE $z$ is not Pareto optimal. Let $z'$ be a Pareto improvement of $z$. Let $k$ be the step at which $z'$ was eliminated, $i$ the player who preempted it and $I_{i,j}$ the reached information set at which it was preempted. Thus $u_i(z')$ is smaller than the agent's maximin, considering all players and all reached information sets. But since $z'$ is a Pareto improvement of $z$, we have $u_i(z)<u_i(z')$, so that z is preempted at this step too, which contradicts the fact that it is the PTE.\qed
\end{proof}

\subsection{Special cases}

As is known in literature, a game in normal form is a special case of games with imperfect information and perfect recall, in which each player has only one information set, and these information sets are organized below each other to cover the entire cartesian product of possible choices of actions. For these games, all information sets are reached already at step 0, which simplifies the computation down to what was contributed in \citep{Fourny2017}, i.e., an iterative elimination of outcomes that are not individually rational

A game in extensive form with perfect information is a special case of games with imperfect information and perfect recall, in which information sets are all singletons. In these games, the reasoning is simplified because there is only one non-trivial\footnote{i.e., where there is more than one choice of action left} reached information set at every step. In other words, at each step, only one player can preempt outcomes until all subtrees but one are eliminated. The elimination process comes down to \citep{Fourny2018} with the two principles of choice (rational choice, preemption).

\section{Link with quantum theory and current avenue of research}

This construct with decision points in spacetime is to put in direct perspective with \citet{Renner2011}'s spacetime random variables (SV) in thought experiments modelling (i) physicists picking a measurement axis and (ii) the choice (by the universe) of the measurement outcome. \citet{Renner2011} showed that the predictive power of quantum theory cannot be improved under a strong assumption of free choice.

The perfectly transparent equilibrium framework can directly be used to construct alternate theories of physics based on weakened free choice \citep{Fourny2019}, which eliminates impossible worlds \citep{Rantala1982}\citep{Kripke1965}, and under some conditions singles out at most one specific actual world.

Specifically, games with imperfect information compatible with an underlying spacetime semantics provide an alternate framework to discuss quantum measurements involving a game between agents and the universe. Such a framework implies that the universe has some utility or preference function, an idea that is not unusual in the history of science. For example, as pointed by Dupuy in e-mail discussions, the path followed by light can be expressed as a minimization problem amongst possible worlds. Many problems in mechanics can be expressed as the minimization or maximization of some quantity, e.g., the principle of least action. These formulations can be recast as a utility maximization problem.

Thus, given any setup in spacetime with physicists picking measurement axes and carrying out experiments, i.e., measuring an observable, and an outcome being picked (by nature), it is possible to build a game as shown in Section \ref{section-spacetime} where each outcome corresponds to a possible world in the sense of Everett's many-worlds interpretation \citep{Everett1973}. There is a degree of freedom in how to attribute ``nature utilities'' to the various possible global outcomes, as well as ``human utilities'' (the latter probably motivated by classical economics), which provides a formal entry point for designing any number of extension theories. For any such utility assignment, the PTE can be computed as-we -go with a forward induction, singling out one of the possible worlds as being our actual world, it being at most unique, in such a way that literature such as \citep{Renner2011} is not contradicted.

\section{Conclusion}

We have generalized the perfect prediction framework to any games with imperfect information. The underlying interpretation of the Perfectly Transparent Equilibrium is that, given agents endowed with weak free choice (``they could have acted otherwise'') and making decisions at any locations in Minkowski spacetime, it is the only possible world compatible with the fact that they are necessarily rational, and that perfectly predict each other's choices.

This interpretation shows that perfect prediction cohabits well with imperfect information: information is not only deducted from actual signals, constrained by the speed of light; under a weaker form of free choice, information can also be deducted by logical reasoning, eliminating logically impossible worlds.


\bibliographystyle{spbasic}      

\end{document}